\newtheorem{theorem}{Theorem}
\newtheorem{remark}{Remark}
\newtheorem{lemma}{Lemma}
\newtheorem{corollary}{Corollary}
\def\enc{\mathbf{u}_\textrm{enc}}
\def\dec{\mathbf{u}_\textrm{dec}}
\def\encs{u_\textrm{enc}}
\def\decs{u_\textrm{dec}}
\def\enclamb{\lambda_\textrm{enc}}
\def\declamb{\lambda_\textrm{dec}}
\def\lenc{\mathcal{L}_\textrm{enc}}
\def\ldec{\mathcal{L}_\textrm{dec}}
\def\nprcc{\texttt{LeTCC}}
\def\func{\mathbf{f}}
\def\stset{\mathcal{F}}
\def\fhat{\mathbf{\hat{f}}}
\def\est{\fhat_{\bm{\alpha},\bm{\beta}}[\enc, \dec, \stset]}
\newcommand{\norm}[1]{\left\lVert#1\right\rVert}
\newcommand{\hil}[1]{\mathcal{H}^{#1}\left(\Omega\right)}
\newcommand{\cmr}[1]{\mathcal{C}^{#1}\left(\Omega\right)}
\newcommand{\cm}[2]{\mathcal{C}^{#1}\left(\Omega; \mathbb{R}^#2\right)}
\newcommand{\hilm}[2]{\mathcal{H}^{#1}\left(\Omega; \mathbb{R}^#2\right)}
\newcommand{\lp}[1]{{L}^{#1}\left(\Omega\right)}
\newcommand{\lec}[2]{\stackrel{\text{#1}}{#2}}
\begin{document}
\title{General Coded Computing in a Probabilistic Straggler Regime} 


\author{%
  \IEEEauthorblockN{Parsa Moradi}
  \IEEEauthorblockA{University of Minnesota, Twin Cities\\
                    Minneapolis, MN, USA\\
                    moradi@umn.edu}
  \and
  \IEEEauthorblockN{Mohammad Ali Maddah-Ali}
  \IEEEauthorblockA{University of Minnesota, Twin Cities\\
                    Minneapolis, MN, USA\\
                    maddah@umn.edu}
}
\maketitle


\begin{abstract}
Coded computing has demonstrated promising results in addressing straggler resiliency in distributed computing systems. However, most coded computing schemes are designed for exact computation, requiring the number of responding servers to exceed a certain recovery threshold. Additionally, these schemes are tailored for highly structured functions. Recently, new coded computing schemes for general computing functions, where exact computation is replaced with approximate computation, have emerged. In these schemes, the availability of additional results corresponds to more accurate estimation of computational tasks. This flexibility introduces new questions that need to be addressed.
This paper addresses the practically important scenario in the context of general coded computing, where each server may become a straggler with a probability \( p \), independently from others. We theoretically analyze the approximation error of two existing general coded computing schemes: Berrut Approximate Coded Computing (\texttt{BACC}) and Learning Theoretic Coded Computing (\(\nprcc\)). Under the probabilistic straggler configuration, we demonstrate that the average approximation error for \texttt{BACC} and \(\nprcc\) converge to zero with high probability with the rate of at least \(\mathcal{O}(\log^3_{\nicefrac{1}{p}}(N)\cdot{N^{-3}})\) and \(\mathcal{O}(\log^4_{\nicefrac{1}{p}}(N)\cdot{N^{-2}})\), respectively. This is perhaps  surprising, as earlier results does not indicate a convergence when the number of stragglers scales with the total number of servers $N$. However, in this case, despite the average number of stragglers being $Np$, the independence of servers in becoming stragglers allows the approximation error to converge to zero.
These theoretical results are validated through experiments on various computing functions, including deep neural networks.
\end{abstract}

\section{Introduction}\label{sec:introduction}
Coded computing has emerged as a powerful tool to improve the reliability and security of distributed computing systems, particularly, in presence of slow servers, known as \emph{stragglers}, which fail to deliver results within a reasonable time, and adversarial servers that maliciously or unintentionally alter their outputs arbitrarily~\cite{jahani2022berrut,yu2020straggler,karakus2017straggler,soleymani2022approxifer, yu2019lagrange, moradicoded}.

Initially, the primary objective of coded computing was to achieve exact recovery, often within the context of finite fields. Within this framework, researchers have developed highly efficient solutions, particularly for structured computations such as polynomials \cite{yu2019lagrange, fahim2019numerically, fahimappx} and matrix multiplications \cite{yu2020straggler, yu2017polynomial, yu2020entangled, opt-recovery, short}.
These methods often impose a strict recovery threshold: if the number of non-straggler servers exceeds this threshold, exact recovery is achievable; otherwise, the computation process fails entirely \cite{yu2019lagrange,yu2017polynomial}.

The limited scope of conventional coded computing, however, does not address the requirements of modern distributed machine learning applications. In ML applications, computation tasks often lack a specific structure. Additionally, computations are performed over real numbers, where approximate results are generally sufficient. This motivates the development of \emph{general coded computing}, which can handle a wide range of computations and exploit the possibility of \emph{approximate recovery} to reduce the required computing resources \cite{moradicoded, jahani2022berrut}. Unlike exact recovery, approximate recovery does not rely on a strict recovery threshold. Instead, the approximation error decreases as the number of non-straggling servers increases, making this approach more practical for real-world scenarios where exact computation is often infeasible or unnecessary
\cite{moradicoded,soleymani2022approxifer,jahani2022berrut,martinez2024privacy}. In particular, in two  schemes, namely Berrut Approximate Coded Computing (\texttt{BACC})~\cite{jahani2022berrut} and Learning Theoretic Coded Computing ($\nprcc$)~\cite{moradicoded}, 
the approximation errors are respectively bounded by $\mathcal{O}(\frac{S^4}{N^2})$  and $\mathcal{O}(\frac{S^3}{N^3})$, where $N$ is the number of the servers and $S$ is number of stragglers. 

In this context, a practical question arises: if each server has a probability $p$ of becoming a straggler, independent of the others,  rather than imposing a limit of at most $S$
straggler servers, does the approximation error still converge to zero? If it does, what is the rate of convergence? One would naively argue that in this case, the number of stragglers is in average $pN$. Since $pN$ scales with $N$, the results of \cite{jahani2022berrut} and \cite{moradicoded} does not guarantee any convergence of approximation error to zero. Therefore, maybe in this case, we do not have any convergence. 

In this paper, we address this question and show that the above understanding is not correct. We theoretically analyze the  approximation error in the presence of probabilistic stragglers for two general coded computing schemes of \texttt{BACC} and $\nprcc$. We demonstrate that the approximation errors of \texttt{BACC} and $\nprcc$ schemes converge to zero at rates of at least $\mathcal{O}(\log^3_{1/p}(N) N^{-3})$ and $\mathcal{O}(\log^4_{1/p}(N) N^{-2})$, respectively, with high probability. This result demonstrates that the independence of servers in becoming stragglers indeed contributes positively to the convergence of the error.  In the end, we validate our theoretical findings through experiments on various computing functions, including deep neural networks (see Fig.~\ref{fig:rate_all}).

The paper is organized as follows: Section~\ref{sec:prelim} provides the preliminaries, and Section~\ref{sec:prob_form} defines the problem formulation. Section~\ref{sec:main_res} presents the main theoretical results, with proof sketches detailed in Section~\ref{sec:proof}. Lastly, Section~\ref{sec:exp_res} discusses the experimental results.

{\bf Notations:} In this paper, bold uppercase letters (e.g., $\mathbf{A}$) are used to denote matrices, while bold lowercase letters (e.g., $\mathbf{x}$) represent vectors. Coded matrices and vectors are indicated with a tilde, such as $\tilde{\mathbf{x}}$ and $\tilde{\mathbf{A}}$. The set $\{1, 2, \ldots, n\}$ is denoted by $[n]$, and the size of a set $S$ is written as $|S|$. The $i$-th component of a vector-valued function $\mathbf{f}$ is expressed as $f_i(\cdot)$. The first, second, and $k$-th order derivatives of a scalar function $f$ are represented as $f'$, $f''$, and $f^{(k)}$, respectively.
The $\ell_2$-norm of a vector $\mathbf{x}$ is denoted by $\norm{\mathbf{x}}_2$, while the $L_p$-norm of a function $f(\cdot)$ over the interval $\Omega$ is written as $\norm{f}_{\lp{p}}$. The space $\hilm{2}{m}$ corresponds to the reproducing kernel Hilbert space (RKHS) of second-order Sobolev functions for vector-valued functions of dimension $m$ over $\Omega$. Additionally, $\cm{2}{m}$ represents the space of all vector-valued functions of dimension $m$ that with a second derivative over the interval $\Omega$. For one dimension, we simply use $\hil{2}$ and $\cmr{2}$.

\section{Preliminaries}\label{sec:prelim}
\subsection{Coded Computing for Straggler Resistance}
Consider a scenario with one master node and $N$ servers. The master node aims to use the servers to compute $\{\func(\mathbf{x}_k)\}_{k=1}^K$ in a distributed manner, where $\mathbf{x}_k \in \mathbb{R}^d$. Here, $\func: \mathbb{R}^d \to \mathbb{R}^m$ can be any function, ranging from a simple one-dimensional function to a sophisticated deep neural network, with $K, d, m \in \mathbb{N}$. However, some servers can potentially become \emph{stragglers}, not finishing their tasks within the required deadline. Therefore, assigning each data point to a single server would not be an effective approach.

To address this issue, the master node uses coding. It sends $N$ coded data points to each server using an encoder function. Each coded data point is a combination of the original data points. Each server then applies $\func(\cdot)$ to its received coded data, returning these \emph{coded results} to the master node. The objective for the master node is to use some decoder function to approximate $\fhat(\mathbf{x}_k) \approx \func(\mathbf{x}_k)$, even when some servers act as stragglers. The redundancy built into the coded data and the corresponding coded results allows the master node to reconstruct the intended outputs, $\{\func(\mathbf{x}_k)\}_{k=1}^K$, using the results of the non-stragglers.

More formally, a general coded computing framework can be described as a three-step process:
\begin{enumerate}
    \item \textbf{Encoding:} The master node uses an encoder function $\enc: \mathbb{R} \to \mathbb{R}^d$ to map a set of fixed, distinct, and ordered points $\alpha_1 < \alpha_2 < \dots < \alpha_K \in \mathbb{R}$ (\emph{encoder mapping points}) into data points $\{\mathbf{x}_k\}_{k=1}^K$ in $ \mathbb{R}^d$. This mapping can be either approximate, i.e., $\mathbf{x}_k \approx \enc(\cdot)$ \cite{moradicoded} or precise, i.e., $\mathbf{x}_k = \enc(\cdot)$ \cite{jahani2022berrut}. Then, the same encoder function $\enc(\cdot)$ is applied to another set of fixed, distinct, and ordered points $\beta_1 < \beta_2 < \dots < \beta_N \in \mathbb{R}$ (\emph{decoder mapping points}). The master node subsequently sends the encoded data points $\tilde{\mathbf{x}}_n = \enc(\beta_n)$ to each server $n$, where each $\tilde{\mathbf{x}}_n$ represents a combination of all initial points $\{\mathbf{x}_k\}_{k=1}^K$.

    \item \textbf{Computing:} Each server $n \in [N]$ computes $\func(\tilde{\mathbf{x}}_n) = \func(\enc(\beta_n))$ on their given input and sends the result back to the master node.
    
    \item \textbf{Decoding:} The master node collects the available results $\{\func(\tilde{\mathbf{x}}_v)\}_{v \in \stset}$ from the set of non-straggler servers $\stset$. It then employs a decoder function $\dec: \mathbb{R} \to \mathbb{R}^m$ to map the points $\{\beta_v\}_{v \in \stset}$ to the servers' results $\{\func(\tilde{\mathbf{x}}_v)\}_{v\in \stset} = \{\func(\enc(\beta_{v}))\}_{v\in \stset}$. Again, the decoder mapping can be either approximate \cite{moradicoded} or accurate \cite{jahani2022berrut}. Finally, the master node computes $\hat{\func}(\mathbf{x}_k) := \dec(\alpha_k)$ as an estimate of $\func(\mathbf{x}_k)$ for $k \in [K]$ leveraging the relationship that  $\dec(\alpha_k) \approx \func(\enc(\alpha_k)) \approx \func(\mathbf{x}_k)$.
\end{enumerate}

\subsection{Coded Computing Frameworks for General Computing Function}
The first coded computing framework designed for general high-dimensional computing functions was Berrut Approximation Coded Computing (\texttt{BACC}) introduced by \cite{jahani2022berrut}. In this work, Berrut's rational interpolation \cite{berrut1988rational} is employed for mapping during both the encoding and decoding stages. More specifically, the encoder and decoder functions are defined as follows:
\begin{gather}\label{eq:berrut_enc}
    \enc(z)=\sum_{i=1}^{K} \frac{\frac{(-1)^i}{\left(z-\alpha_i\right)}}{\sum_{j=1}^{K} \frac{(-1)^j}{\left(z-\alpha_j\right)}} \mathbf{x}_i, \\ \label{eq:berrut_dec}
    \dec(z)=\sum_{v \in \stset} \frac{\frac{(-1)^v}{\left(z-\beta_v\right)}}{\sum_{i \in \stset} \frac{(-1)^i}{\left(z-\beta_i\right)}} \func\left(\enc\left(\beta_v\right)\right).
\end{gather}
Berrut interpolant benefits from high stability compared to other rational interpolations, as its denominator does not have poles for any set of interpolation points \cite{berrut1988rational}. Additionally, both encoding and decoding mapping are precise, meaning that $\enc(\alpha_k) = \mathbf{x}_k$ and  $\dec(\beta_v) = \func(\enc(\beta_v))= \func(\tilde{\mathbf{x}}_v)$ for $k \in [K]$ and $v \in \stset$. 

The authors in \cite{moradicoded} propose a new scheme called $\nprcc$, based on learning theory. They define an end-to-end loss function for the entire process and leverage learning theory to design the encoding and decoding mappings.
\begin{align}\label{eq:letcc_enc}
    \enc(\cdot)&=\underset{\mathbf{u} \in \hilm{2}{d}}{\operatorname{argmin}} \frac{1}{|\stset|} \sum^K_{k=1}\norm{\mathbf{u}\left(\alpha_k\right)-\mathbf{x}_k}^2_2 + \enclamb \sum^d_{j=1}  \int_{\Omega} \left(u_j''(t)\right)^2\,dt, 
    \\ \label{eq:letcc_dec}
    \dec(\cdot)&\underset{\mathbf{u} \in \hilm{2}{m}}{\operatorname{argmin}} \frac{1}{|\stset|} \sum_{v \in \stset}\norm{\mathbf{u}\left(\beta_v\right)-\func\left(\enc\left(\beta_v\right)\right)}^2_2 + \declamb \sum^m_{j=1}  \int_{\Omega} \left(u_j''(t)\right)^2\,dt,
\end{align}
where parameters $\declamb$ and $\enclamb$ are hyper-parameters that control the smoothness of the decoding and encoding functions, respectively.
\section{Problem Formulation}\label{sec:prob_form}
In contrast to previous problem settings in coded computing, which assume that at most $S$ servers are stragglers, this paper investigates a more practical scenario where each server has a probability $p$ of being a straggler.

Following the notation in~\cite{moradicoded}, let $\est(\cdot)$ denote the estimator function for the coded computing scheme. Let $\bm{\alpha} = [\alpha_1, \dots, \alpha_K]^T$ and $\bm{\beta} = [\beta_1, \dots, \beta_N]^T$. Furthermore, let $\mathcal{F} \sim F_N$ denote a random variable representing the set of non-straggler servers, where $F_N$ is a distribution over all subsets of $N$ servers. Consequently, the performance of a coded computing estimator $\fhat(\mathbf{x}) := \est(\cdot)$, applied to the data points $\{\mathbf{x_k}\}_{k=1}^K$, can be evaluated using the following average approximation error:
\begin{align}\label{eq:perf_metric}
    \mathcal{L}_\stset(\fhat) &:= \frac{1}{K} \sum^K_{k=1} \norm{\fhat(\mathbf{x}_k) - \func(\mathbf{x}_k)}^2_2
=\frac{1}{K}
    \sum^K_{k=1} \norm{\dec(\alpha_k)- \func(\mathbf{x}_k)}^2_2.
\end{align}
Our goal is to derive a high-probability upper bound for $\mathcal{L}_\stset(\fhat)$ and compare the performance of $\texttt{BACC}$ and $\nprcc$ schemes in probabilistic configuration of stragglers. 

\section{Main Results}\label{sec:main_res}
In this section, we provide our theoretical guarantees on the performance of the $\nprcc$ and $\texttt{BACC}$ frameworks under a probabilistic straggler model, where each server becomes a straggler with probability $p$. We establish an upper bound and analyze the convergence rate for both frameworks. For simplicity, we focus on a one-dimensional computing function $f:\mathbb{R} \to \mathbb{R}$. The results for both schemes are extendable to higher dimensions.

Let us define two variables representing minimum and maximum consecutive distance of decoder mapping points.
\begin{gather}\label{eq:max_dist_st}
\Delta_\textrm{max}:=\underset{n\in \{0,\dots,N\}}{\max} \{\beta_{n+1}-\beta_n\},\  \Delta_\textrm{min}:=\underset{n\in [N-1]}{\min} \left\{\beta_{n+1}-\beta_n\right\},
\end{gather}
Without loss of generality, we assume that the points $\{\alpha_k\}_{k=1}^K$ and $\{\beta_n\}_{n=1}^N$ lie within the interval $\Omega := (-1, 1)$.

The following theorems provide an upper bound for the $\nprcc$ and $\texttt{BACC}$ schemes respectively, under the probabilistic straggler configuration.
\begin{theorem}\label{th:letcc}
Consider the $\nprcc$ framework consisting of $N$ servers, where $\encs(\cdot), \decs(\cdot) \in \hil{2}$ and $\declamb \leqslant N^{-4}$. Let $f(\cdot)$ be a $\nu$-Lipschitz function satisfying $|f''| \leq \eta$ for some $\eta > 0$. Assume that each server becomes a straggler with probability $p := 1-q > 0$. If $\frac{\Delta_\textrm{max}}{\Delta_\textrm{min}} \leqslant B$ for some constant $B > 0$, then for any $\delta \in (0, 1)$, there exist constants $C_l$ and $n_0$ such that for $N > n_0$, with probability at least $1-\delta$, the following holds:
\begin{align}\label{eq:th_letcc}
    \mathcal{L}_\stset(\hat{f}) \leqslant C_l\frac{\left(\log_{\frac{1}{p}}(qN) + \sqrt{\frac{1}{\delta}}\right)^3}{N^3} \cdot \norm{(f \circ \encs)''}^2_{\lp{2}} +\frac{2\nu^2}{K} \sum_{k=1}^K \left(f(\encs(\alpha_k)) - f(x_k)\right)^2.
\end{align}

\end{theorem}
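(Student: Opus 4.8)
The plan is to separate the pointwise error into a \emph{decoding} contribution and an \emph{encoding} contribution, bound the decoding contribution deterministically for a fixed straggler pattern in terms of the largest gap between the surviving decoder points, and then control that gap probabilistically via the longest run of consecutive stragglers. Throughout I write $g := f \circ \encs$, so that the quantity appearing in \eqref{eq:th_letcc} is $\norm{g''}^2_{\lp{2}}$, and I bound the realized error $\frac{1}{K}\sum_{k}(\decs(\alpha_k) - f(x_k))^2$, the high-probability statement being over the random straggler set.

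First I would decompose, for each $k$,
\begin{align*}
\left(\decs(\alpha_k) - f(x_k)\right)^2 \leqslant 2\left(\decs(\alpha_k) - g(\alpha_k)\right)^2 + 2\left(f(\encs(\alpha_k)) - f(x_k)\right)^2 .
\end{align*}
Averaging over $k$, the second summand matches (via the $\nu$-Lipschitz constant of $f$) the additive encoding term of \eqref{eq:th_letcc}, which measures the encoder mismatch $\encs(\alpha_k)\approx x_k$; the problem thus reduces to bounding the decoding error $\frac{1}{K}\sum_k(\decs(\alpha_k)-g(\alpha_k))^2$.

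For a fixed non-straggler set $\stset$, note that each server returns $f(\encs(\beta_v)) = g(\beta_v)$ exactly, so $\decs$ is the roughness-penalized least-squares (smoothing-spline) fit in $\hil{2}$ to the \emph{noiseless} samples $\{(\beta_v, g(\beta_v))\}_{v\in\stset}$. I would then invoke the $\nprcc$ approximation analysis underlying the deterministic $\mathcal{O}(S^3/N^3)$ guarantee: since $\declamb \leqslant N^{-4}$ the penalty is negligible and the fit is essentially interpolatory, so the error is governed by the largest gap $h_{\max}$ between consecutive surviving points $\{\beta_v\}_{v\in\stset}$. Concretely, writing $e := \decs - g$, on an interval of length $h$ bounded by two consecutive survivors $e$ (nearly) vanishes at both ends; Rolle's theorem together with Cauchy--Schwarz give $|e(z)| \lesssim h^{3/2}\norm{e''}_{\lp{2}}$, while $\norm{e''}_{\lp{2}} \lesssim \norm{g''}_{\lp{2}}$ because $\decs$ minimizes the penalized objective. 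Squaring yields
\begin{align*}
\frac{1}{K}\sum_{k=1}^K\left(\decs(\alpha_k)-g(\alpha_k)\right)^2 \;\lesssim\; h_{\max}^3\,\norm{(f\circ\encs)''}^2_{\lp{2}} .
\end{align*}

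The crux is to control $h_{\max}$ under the random straggler model. Between two consecutive survivors the gap spans at most $1+R_N$ of the original sub-intervals, where $R_N$ is the length of the longest run of consecutive stragglers; hence $h_{\max}\leqslant (1+R_N)\Delta_\textrm{max}$, and since the $N$ points lie in $\Omega=(-1,1)$ with $\Delta_\textrm{max}\leqslant B\,\Delta_\textrm{min}$, quasi-uniformity forces $\Delta_\textrm{max}=\mathcal{O}(1/N)$. Here $R_N$ is the longest success-run of an i.i.d.\ Bernoulli$(p)$ sequence, and I would use the classical asymptotics $\mathrm{E}[R_N]=\log_{1/p}(qN)+\mathcal{O}(1)$ and $\mathrm{Var}(R_N)=\mathcal{O}(1)$ together with Chebyshev's inequality to obtain, with probability at least $1-\delta$, $R_N \leqslant \log_{1/p}(qN)+C\sqrt{1/\delta}$. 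Substituting $h_{\max}\lesssim\big(\log_{1/p}(qN)+\sqrt{1/\delta}\big)/N$ into the deterministic bound and cubing produces the first term of \eqref{eq:th_letcc}. The main obstacle is exactly this probabilistic step: getting the stated $\log_{1/p}(qN)+\sqrt{1/\delta}$ dependence---in particular the sharp leading term with the factor $q$ and the $\sqrt{1/\delta}$ scaling characteristic of a second-moment bound---requires the precise first- and second-moment behavior of the longest straggler run rather than the crude union bound $\Pr[R_N\geqslant\ell]\leqslant Np^\ell$; the remaining care goes into boundary runs and the low-probability event that too few servers survive for the spline bound to apply.
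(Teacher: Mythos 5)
Your proposal is correct and follows essentially the same route as the paper: the identical add-and-subtract decomposition into encoder (Lipschitz) and decoder terms, the reduction of the decoder error to the maximal surviving gap via $\Delta^\stset_\textrm{max} \leqslant (R_{\stset,N}+1)\Delta_\textrm{max}$ with quasi-uniformity giving $\Delta_\textrm{max} = \mathcal{O}(1/N)$, and the same probabilistic step (Gordon--Schilling--Waterman moment bounds for the longest head run plus one-sided Chebyshev, yielding $R_{\stset,N} \lesssim \log_{\frac{1}{p}}(qN) + \sqrt{\nicefrac{1}{\delta}}$ with probability $1-\delta$). The only difference is cosmetic: you obtain the deterministic $h_{\max}^3\norm{(f\circ\encs)''}^2_{\lp{2}}$ spline bound by an elementary Rolle/Cauchy--Schwarz argument and treat the $\declamb \leqslant N^{-4}$ penalty informally as negligible, whereas the paper invokes Ragozin's smoothing-spline error bounds together with a Sobolev interpolation inequality, which rigorously account for that penalty through the term $L$.
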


\begin{theorem}\label{th:berrut}
Consider the $\texttt{BACC}$ framework consisting of $N$ servers with $f \in \cmr{2}$. Assume that each server is a straggler with probability of $p := 1-q > 0$. If there is a constant $B$ such that $\frac{\Delta_\textrm{max}}{\Delta_\textrm{min}} \leqslant B$, then
for any $\delta \in (0, 1)$, there exist constants $C_b$ and $n_0$ such that for $N > n_0$ with a probability of at least  $1-\delta$, we have:
\begin{align}\label{eq:th_berrut}
    \mathcal{L}_\stset(\hat{f}){\leqslant} C_b\cdot B\frac{\left(\log_{\frac{1}{p}}(qN) + \sqrt{\frac{1}{\delta}}\right)^4}{N^2} \cdot \left(\sum^2_{i=1} \norm{(f \circ \encs)^{(i)}}_{\lp{\infty}}\right)^2.
\end{align}
\end{theorem}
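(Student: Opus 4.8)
The plan is to decouple the argument into a purely deterministic interpolation-error bound that depends only on the \emph{geometry} of the surviving (non-straggler) nodes, and a probabilistic bound on that geometry. The key observation is that, for \texttt{BACC}, the encoding is exact, so $\encs(\alpha_k)=x_k$ and the quantity $\decs(\alpha_k)-f(x_k)$ is precisely the barycentric (Berrut) interpolation error of the smooth target $g:=f\circ\encs\in\cmr{2}$, evaluated at $\alpha_k$, when $g$ is sampled only at the surviving points $\{\beta_v\}_{v\in\stset}$ in \eqref{eq:berrut_dec}. (This is also why, unlike Theorem~\ref{th:letcc}, no additional bias term appears.) The error is therefore governed entirely by how large a gap the stragglers can open between consecutive surviving nodes. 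I would first record the elementary geometric fact: if $\Gamma$ denotes the length of the longest run of \emph{consecutive} stragglers, the largest gap between adjacent surviving nodes is at most $(\Gamma+1)\Delta_{\max}$; combined with the quasi-uniformity hypothesis $\Delta_{\max}/\Delta_{\min}\le B$ and $\sum_n(\beta_{n+1}-\beta_n)=|\Omega|=2$ from \eqref{eq:max_dist_st}, this pins the scales to $\Delta_{\max}=\Theta(B/N)$ and $\Delta_{\min}=\Theta(1/N)$.

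Next I would establish, adapting the \texttt{BACC} error analysis of \cite{jahani2022berrut}, a pointwise estimate for the linear barycentric interpolant in terms of $\|g'\|_{\lp\infty}$, $\|g''\|_{\lp\infty}$ and the local mesh. Feeding in the worst-case gap $(\Gamma+1)\Delta_{\max}$ and the mesh-ratio $B$, and then averaging the squared error over the $K$ evaluation points $\{\alpha_k\}$, should yield a deterministic inequality of the form
\begin{align}
\mathcal{L}(\hat f)\;\le\;C_b\,B\,\frac{\Gamma^4}{N^2}\Big(\textstyle\sum_{i=1}^2\|(f\circ\encs)^{(i)}\|_{\lp\infty}\Big)^2,
\end{align}
valid for \emph{every} fixed straggler realization. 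This reduces Theorem~\ref{th:berrut} to controlling $\Gamma$, and it is exactly here that the paper's main message becomes visible: only the \emph{longest run} of stragglers, not the total straggler count $S=Np$, enters the bound, so a large but scattered set of stragglers never opens a damaging gap.

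The probabilistic heart of the argument is then a tail bound on $\Gamma$, the longest run of consecutive stragglers in $N$ independent $\mathrm{Bernoulli}(p)$ trials. I would invoke the classical Erd\H{o}s--R\'enyi/Schilling theory of the longest success run, for which $\mathbb{E}[\Gamma]=\log_{1/p}(qN)+\mathcal{O}(1)$ while the variance remains bounded, $\mathrm{Var}(\Gamma)=\mathcal{O}(1)$, uniformly in $N$. Chebyshev's inequality gives $\mathbb{P}\big(\Gamma\ge\mathbb{E}[\Gamma]+t\big)\le\mathrm{Var}(\Gamma)/t^2$; setting the right-hand side equal to $\delta$ forces $t=\mathcal{O}(1/\sqrt{\delta})$, so that for $N>n_0$, with probability at least $1-\delta$,
\begin{align}
\Gamma\;\le\;\log_{1/p}(qN)+C/\sqrt{\delta}.
\end{align}
The emergence of $\sqrt{1/\delta}$ (rather than the $\log(1/\delta)$ a naive union bound would give) is precisely the fingerprint of this second-moment step. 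Substituting this high-probability bound on $\Gamma$ into the deterministic inequality, using $(a+Cb)^4\le C^4(a+b)^4$ for $a,b\ge0$, $C\ge1$ to absorb the constant into $C_b$, reproduces \eqref{eq:th_berrut}.

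I expect the main obstacle to be the deterministic interpolation bound, specifically extracting the stated powers $\Gamma^4/N^2$ together with the single factor of $B$: one must track how a run of $\Gamma$ missing nodes degrades the barycentric weights near the gap and verify that averaging the squared pointwise error over $\{\alpha_k\}$ introduces no mesh-dependence beyond the stated $B$. A secondary difficulty is making the Erd\H{o}s--R\'enyi mean and variance estimates sufficiently non-asymptotic to produce an explicit threshold $n_0$ and to justify the clean additive form $\log_{1/p}(qN)+\sqrt{1/\delta}$ once the fourth power is applied.
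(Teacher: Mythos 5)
Your proposal follows essentially the same route as the paper's proof: a deterministic barycentric (Floater--Hormann-type) error bound in terms of the largest surviving-node gap and the mesh ratio, the gap controlled by $(R_{\stset, N}+1)\Delta_\textrm{max}$ where $R_{\stset, N}$ is the longest run of consecutive stragglers, and then a Chebyshev-type tail bound built on the Gordon--Schilling--Waterman mean/variance estimates for the longest head run (the paper's Lemma~\ref{lem:longest_run} and Lemma~\ref{lem:longest_run_2}), which is exactly where the $\sqrt{1/\delta}$ term originates. The only deviations are cosmetic---you use the two-sided rather than the one-sided Chebyshev inequality, and you phrase the deterministic bound per straggler realization instead of under the expectation---so your approach is sound and matches the paper's.
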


\begin{corollary}\label{cor:rate}
By applying Theorems~\ref{th:letcc} and \ref{th:berrut}, in the probabilistic straggler setting, the average approximation error converges to zero with high probability. The convergence rates are at least $\mathcal{O}(\frac{\log^3_{1/p}(N)}{N^3})$ for $\nprcc$ and at least $\mathcal{O}(\frac{\log^4_{1/p}(N)}{N^2})$ for $\texttt{BACC}$.
\end{corollary}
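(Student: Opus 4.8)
The plan is to read off the corollary directly from the finite-$N$ bounds of Theorems~\ref{th:letcc} and~\ref{th:berrut}, performing a purely asymptotic simplification in $N$ while holding the target confidence level $\delta$ fixed. The key observation is that every factor in those bounds, apart from the explicit $N$- and logarithm-dependence, is a constant that does not grow with the number of servers.

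First I would fix $\delta \in (0,1)$ as the desired (high-probability) confidence level, so that $\sqrt{1/\delta}$ is a constant independent of $N$. Since the straggling probability $p$, and hence $q = 1-p$, is a fixed constant in $(0,1)$, I would decompose the logarithmic factor as
\begin{equation*}
\log_{\frac{1}{p}}(qN) = \log_{\frac{1}{p}}(N) + \log_{\frac{1}{p}}(q) = \log_{\frac{1}{p}}(N) + \mathcal{O}(1),
\end{equation*}
so that $\log_{\frac{1}{p}}(qN) + \sqrt{1/\delta} = \Theta\!\big(\log_{\frac{1}{p}}(N)\big)$ as $N \to \infty$, and consequently the cubic and quartic powers appearing in~\eqref{eq:th_letcc} and~\eqref{eq:th_berrut} satisfy $\big(\log_{\frac{1}{p}}(qN)+\sqrt{1/\delta}\big)^{3} = \Theta\!\big(\log^{3}_{\frac{1}{p}}(N)\big)$ and $\big(\log_{\frac{1}{p}}(qN)+\sqrt{1/\delta}\big)^{4} = \Theta\!\big(\log^{4}_{\frac{1}{p}}(N)\big)$.

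Next I would argue that the remaining factors are $\mathcal{O}(1)$ in $N$: the absolute constants $C_l, C_b$, the ratio bound $B$, the Lipschitz constant $\nu$, and the norms $\norm{(f\circ \encs)''}_{\lp{2}}$ and $\big(\sum_{i=1}^{2}\norm{(f\circ \encs)^{(i)}}_{\lp{\infty}}\big)^2$ all depend only on $f$ and the smoothness-controlled encoder, not on the number of servers. Substituting these into the two theorem statements collapses the leading terms to $\mathcal{O}\!\big(\log^{3}_{\frac{1}{p}}(N)\,N^{-3}\big)$ for $\nprcc$ and $\mathcal{O}\!\big(\log^{4}_{\frac{1}{p}}(N)\,N^{-2}\big)$ for $\texttt{BACC}$. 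Since the polynomial decay $N^{-3}$ (respectively $N^{-2}$) dominates any fixed power of $\log N$, both leading terms vanish as $N \to \infty$, yielding the asserted convergence and rates.

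The one point requiring care is the residual encoding term $\frac{2\nu^2}{K}\sum_{k=1}^{K}\big(f(\encs(\alpha_k))-f(x_k)\big)^2$ in the $\nprcc$ bound, which is absent for $\texttt{BACC}$ precisely because its encoder is exact, $\encs(\alpha_k)=x_k$. The hard part is therefore to confirm that this term cannot obstruct convergence: it is independent of both $N$ and the straggler process, being governed solely by how well the encoding spline reproduces the data, and under the hypothesis $\enclamb \leqslant N^{-4} \to 0$ the smoothing spline tends to the interpolant so that this contribution is driven to zero (or is at worst a fixed, lower-order quantity relative to the $\mathcal{O}(\log^{3}_{\frac{1}{p}}(N)\,N^{-3})$ decoding term). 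With this encoding contribution controlled, the decoding error governs the asymptotics, completing the argument.
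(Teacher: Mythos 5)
Your treatment of \texttt{BACC} and of the logarithmic factors agrees with the paper: there, too, the \texttt{BACC} rate is read directly off Theorem~\ref{th:berrut} after noting that $f\circ\encs$ and all prefactors are independent of $N$. The genuine gap is in how you dispose of the $\nprcc$ encoding residual $\frac{2\nu^2}{K}\sum_{k}\left(f(\encs(\alpha_k))-f(x_k)\right)^2$. First, you invoke ``the hypothesis $\enclamb\leqslant N^{-4}$,'' but Theorem~\ref{th:letcc} constrains only the \emph{decoder} parameter $\declamb$; no assumption on $\enclamb$ is available, so the claim that the encoding spline tends to the interpolant has no basis in the stated hypotheses — and even granting it, you would need a quantitative rate (the residual must decay at least like $\log^3_{\frac{1}{p}}(N)\,N^{-3}$, not merely tend to zero). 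Second, your fallback position is logically inverted: a fixed nonzero quantity is not ``lower-order'' than the $\mathcal{O}(\log^3_{\frac{1}{p}}(N)\,N^{-3})$ decoding term — it \emph{dominates} it. If the encoding residual were merely a constant, the total error would not converge to zero and the corollary would be false, so this branch of your argument cannot be salvaged.

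The paper closes exactly this gap with a comparison argument that your proposal is missing. It considers the natural spline $\widetilde{\encs}$ interpolating the data $\{(\alpha_k,x_k)\}_{k=1}^K$ as a candidate encoder: the upper bound of Theorem~\ref{th:letcc}, minimized over encoders, is at most its value at $\widetilde{\encs}$. Since $\widetilde{\encs}(\alpha_k)=x_k$ exactly, the encoding term vanishes identically, leaving
\begin{equation*}
\mathcal{L}(\hat f)\leqslant C_l\frac{\bigl(\log_{\frac{1}{p}}(qN)+\sqrt{\tfrac{1}{\delta}}\bigr)^3}{N^3}\,\norm{(f\circ\widetilde{\encs})''}^2_{\lp{2}},
\end{equation*}
where $\norm{(f\circ\widetilde{\encs})''}^2_{\lp{2}}$ is a fixed, $N$-independent constant. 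This device also repairs a point you glossed over: the norm $\norm{(f\circ\encs)''}_{\lp{2}}$ appearing in Theorem~\ref{th:letcc} involves the scheme's actual encoder, which in principle varies with $N$ (through $\enclamb$ and the optimization), so asserting it is $\mathcal{O}(1)$ is not automatic; replacing it by the fixed candidate $\widetilde{\encs}$ is precisely what makes the prefactor genuinely independent of $N$ and yields the claimed $\mathcal{O}(\log^3_{\frac{1}{p}}(N)\,N^{-3})$ rate.
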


\begin{remark}
{\normalfont Note that, in the probabilistic setting, the average number of stragglers in $N$ servers is approximately $Np$. Therefore, on average, the number of stragglers is a fraction of $N$. The upper bounds for the average approximation error of the $\texttt{BACC}$ and $\nprcc$ schemes with at most $S$ stragglers, as proved in \cite{moradicoded} and \cite{jahani2022berrut}, are $\mathcal{O}(\frac{S^3}{N^3})$ and $\mathcal{O}(\frac{S^4}{N^2})$, respectively. These results suggest that the average approximation error does not converge when the number of stragglers is a fraction of $N$. However, Corollary~\ref{cor:rate} demonstrates that the probabilistic behavior of stragglers causes the scheme's approximation error to converge to zero.}
\end{remark}
Some commonly used mapping points do not satisfy the conditions stated in Theorems~\ref{th:letcc} and \ref{th:berrut}. Notably, the first and second Chebyshev points—widely used for interpolation and adopted in coded computing frameworks \cite{jahani2022berrut}—are prominent examples. However, the following corollary demonstrates that a similar upper bound can still be established for Chebyshev points.
\begin{corollary}\label{cor:cheb}
    Consider the $\nprcc$ framework with $\declamb \leqslant N^{-6}$ and the $\texttt{BACC}$ framework, both consisting of $N$ servers. Assume the decoder mapping points are the first and second Chebyshev points, respectively: $\{\alpha_k\}_{k=1}^K = \cos(\frac{(2k-1)\pi}{2K})$ and $\{\beta_n\}_{n=1}^N = \cos(\frac{(n-1)\pi}{N-1})$. Under a probabilistic straggler configuration, the same upper bounds for convergence rates as stated in Corollary~\ref{cor:rate} can still be achieved.
\end{corollary}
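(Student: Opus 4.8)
\textbf{Proof proposal for Corollary~\ref{cor:cheb}.}
The plan is to exploit the fact that, although the first and second Chebyshev points violate the hypothesis $\frac{\Delta_\textrm{max}}{\Delta_\textrm{min}} \leqslant B$ of Theorems~\ref{th:letcc} and \ref{th:berrut}, they fail it only \emph{globally}. Writing $\beta_n = \cos\theta_n$ with $\theta_n = \frac{(n-1)\pi}{N-1}$ uniformly spaced, the consecutive spacing $\beta_{n+1}-\beta_n \approx \frac{\pi}{N-1}\sin\bar\theta_n$ is $\Theta(N^{-1})$ in the bulk but shrinks to $\Theta(N^{-2})$ near $\pm 1$, so $\Delta_\textrm{max}=\Theta(N^{-1})$, $\Delta_\textrm{min}=\Theta(N^{-2})$ and $\frac{\Delta_\textrm{max}}{\Delta_\textrm{min}}=\Theta(N)$. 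However, adjacent gaps satisfy $\frac{\beta_{n+1}-\beta_n}{\beta_n-\beta_{n-1}} \leqslant c$ for an absolute constant $c$, i.e. the grid is \emph{locally} quasi-uniform. I would therefore revisit the proofs of the two theorems and isolate the two distinct roles played by the global ratio bound: (i) controlling the largest gap that survives random straggling, and (ii) controlling the stability of the decoder (the Lebesgue constant of the Berrut interpolant, the conditioning of the smoothing-spline problem) and the admissible regularization scale.

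For role (i), the first step is a purely probabilistic max-gap estimate that is independent of the geometry: since each server straggles independently with probability $p$, a union bound over the $N$ possible starting indices shows that, up to an absolute constant, the longest run of consecutive stragglers is at most $\log_{\frac{1}{p}}(qN) + \sqrt{1/\delta}$ with probability at least $1-\delta$, exactly the factor appearing in \eqref{eq:th_letcc} and \eqref{eq:th_berrut}. A run of $\ell$ consecutive deletions merges $\ell{+}1$ adjacent cells into one surviving gap; because adjacent Chebyshev gaps have bounded ratio, the merged gap is comparable to $\ell$ times the local spacing, which is largest, $\Theta(N^{-1})$, in the bulk and only of order $\ell^2 N^{-2}$ near the clustered endpoints. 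Hence the maximum surviving gap is $O\!\big(\log_{\frac{1}{p}}(qN)\,N^{-1}\big)$, the same order as in the globally quasi-uniform case. This is the key observation: endpoint clustering only makes the surviving mesh \emph{finer} where the points are dense, so the worst surviving gap is dictated by the bulk and matches the quasi-uniform bound.

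The second step feeds this controlled surviving mesh size into the deterministic error bounds. For $\texttt{BACC}$ I would invoke the $O(\log N)$ Lebesgue-constant bound for Berrut interpolation at locally quasi-uniform nodes, so that the error is governed by the product of this Lebesgue constant and a power of the surviving mesh, weighted by $\big(\sum_{i=1}^2\norm{(f\circ\encs)^{(i)}}_{\lp{\infty}}\big)^2$; substituting the $O\!\big(\log_{\frac{1}{p}}(qN)\,N^{-1}\big)$ gap reproduces the $N^{-2}$ rate together with the fourth power of the log factor. For $\nprcc$ the decoder is the $t$-space smoothing spline of \eqref{eq:letcc_dec}, whose bias--variance balance requires the penalty to be dominated by a power of the \emph{local} mesh; tracking where $\Delta_\textrm{min}$ enters, the admissible regularization scales like $\Delta_\textrm{max}^2\Delta_\textrm{min}^2$, which equals $N^{-4}$ for a uniform grid but only $N^{-6}$ for Chebyshev since $\Delta_\textrm{min}=\Theta(N^{-2})$ --- precisely the strengthened hypothesis $\declamb\leqslant N^{-6}$. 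With that choice the spline error retains its $O\!\big((\text{surviving gap})^3\big)$ behaviour against $\norm{(f\circ\encs)''}^2_{\lp{2}}$, yielding the $N^{-3}$ rate.

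The main obstacle is step (ii): the deterministic stability and approximation estimates underlying Theorems~\ref{th:letcc} and \ref{th:berrut} were established under \emph{global} quasi-uniformity, so they must be re-derived assuming only that adjacent gaps have bounded ratio. Concretely, I must verify that the Berrut Lebesgue constant and the smoothing-spline conditioning do not degrade because of the $\Theta(N^{-2})$ clustering at $\pm 1$, and that separating the two roles of $\Delta_\textrm{max}$ (which sets the bias through the surviving mesh) and $\Delta_\textrm{min}$ (which sets the stability floor and the admissible $\declamb$) is legitimate. Once these localized estimates are in place, the reassembly is routine and the rates of Corollary~\ref{cor:rate} follow.
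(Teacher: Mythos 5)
Your high-level intuition is the right one — the Chebyshev grid is only locally fine near $\pm 1$, the longest straggler run controls the surviving mesh, and the stronger condition $\declamb \leqslant N^{-6}$ exists to absorb the $\Theta(N)$ global ratio — but the proposal defers exactly the step that constitutes the proof, and the replacement you sketch for \texttt{BACC} would not work as stated. For \texttt{BACC}, the error bound in play is the Floater--Hormann estimate $\norm{h}_{\lp{\infty}} \leqslant \Delta^\stset_\textrm{max}(1+\mu)\sum_{i=1}^2\norm{(f\circ\encs)^{(i)}}_{\lp{\infty}}$, where $\mu$ is the \emph{local mesh ratio of the surviving grid}; a Lebesgue-constant argument is not a substitute, because Berrut's interpolant only reproduces constants, so the generic bound ``$(1+\Lambda_N)\times$ best approximation from the reproduced space'' gives no decay in $N$ at all — the $O(N^{-1})$ rate comes precisely from the sign-cancellation analysis that produces the $\mu$-dependent bound, not from $\Lambda_N = O(\log N)$. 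The missing ingredient, which the paper imports from \cite{jahani2022berrut} as \eqref{eq:bound_mu}, is the Chebyshev-specific fact that deleting at most $s$ consecutive nodes leaves a grid with $\mu \leqslant \frac{(s+1)(s+3)\pi^2}{4}$, \emph{independent of $N$}; combined with $\Delta^\stset_\textrm{max}\leqslant \frac{J_2}{N}(R_{\stset,N}+1)$ and Lemma~\ref{lem:longest_run_2} this closes the argument. Note also that your claim that bounded adjacent-gap ratios alone make a merged run of $\ell$ cells ``comparable to $\ell$ times the local spacing'' is not valid in general (adjacent ratios bounded by $c$ only give a factor $c^{\ell}$); it holds for Chebyshev points because the gap sequence is proportional to $\sin\bigl(\tfrac{(2n-1)\pi}{2(N-1)}\bigr)$, which is again a Chebyshev-specific computation, not a consequence of local quasi-uniformity.

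For $\nprcc$, your worry that the smoothing-spline estimates ``must be re-derived assuming only locally bounded gap ratios'' is unfounded: the bound the paper uses, \eqref{eq:h_1}--\eqref{eq:hp_1} with $L$ given by \eqref{eq:L_formula}, already holds for arbitrary meshes, with the global ratio entering only through $p_2\bigl(\tfrac{\Delta^\stset_\textrm{max}}{\Delta^\stset_\textrm{min}}\bigr)$ \emph{multiplied by} $\declamb$. The paper's proof is then a direct substitution: $\tfrac{\Delta^\stset_\textrm{max}}{\Delta^\stset_\textrm{min}} \leqslant J_1 N (R_{\stset,N}+1)$ and $\Delta^\stset_\textrm{max}\leqslant \tfrac{J_2}{N}(R_{\stset,N}+1)$ give $L \leqslant \tilde C_1 (R_{\stset,N}+1)^3 N^2 \declamb + \tilde C_2 (R_{\stset,N}+1)^4 N^{-4}$, and $\declamb \leqslant N^{-6}$ makes both terms $O\bigl((R_{\stset,N}+1)^4 N^{-4}\bigr)$, after which the proof of Theorem~\ref{th:letcc} applies verbatim. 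Your heuristic $\declamb \lesssim \Delta^2_\textrm{max}\Delta^2_\textrm{min}$ happens to give the right exponent, but it is asserted rather than derived, and no ``re-derivation of conditioning'' is needed. In summary: the $\nprcc$ half of your plan is essentially the paper's argument minus the one computation that makes it rigorous, while the \texttt{BACC} half rests on a stability estimate (the $N$-independent bound on $\mu$ after deletions) that you explicitly leave unverified and propose to replace with a Lebesgue-constant mechanism that cannot yield the claimed rate.
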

\begin{figure*}[t]
     \centering
     \begin{subfigure}[b]{0.49\textwidth}
         \centering
         \includegraphics[width=0.9\textwidth, height=0.6\textwidth]{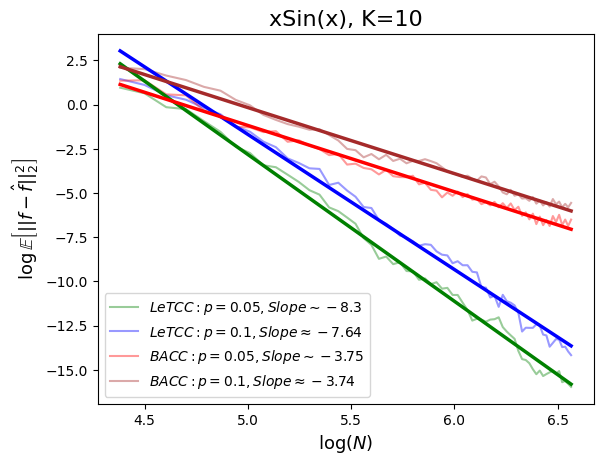}
         \label{fig:rate_xsinx}
     \end{subfigure}
     \hfill
     \begin{subfigure}[b]{0.49\textwidth}
         \centering
         \includegraphics[width=0.9\textwidth, height=0.6\textwidth]{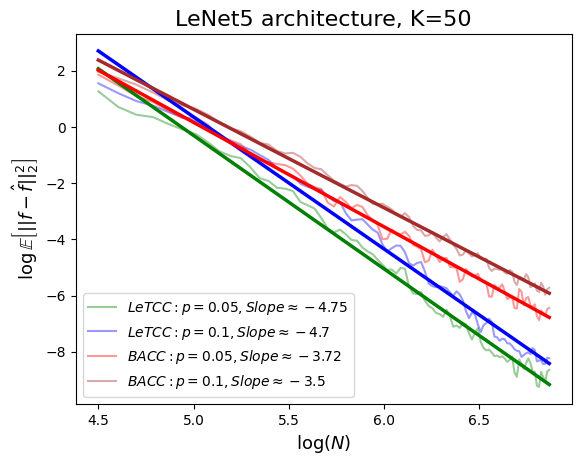}
         \label{fig:rate_lenet}
     \end{subfigure}
     \begin{subfigure}[b]{0.49\textwidth}
         \centering
         \includegraphics[width=0.9\textwidth, height=0.6\textwidth]{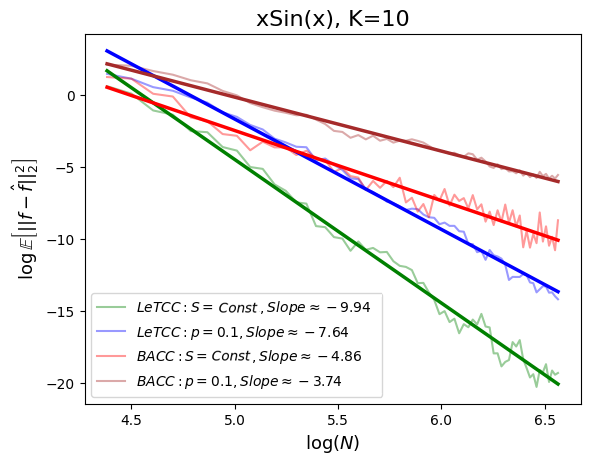}
         \label{fig:rate_xsinx_comp_with_cte}
     \end{subfigure}
     \hfill
     \begin{subfigure}[b]{0.49\textwidth}
         \centering
         \includegraphics[width=0.9\textwidth, height=0.6\textwidth]{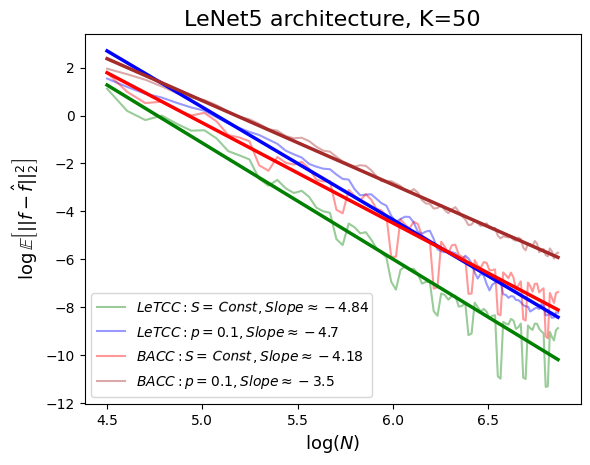}
         \label{fig:rate_lenet_comp_with_cte}
     \end{subfigure}
     \caption{Log-log plot showing the error convergence rates of $\nprcc$ and $\texttt{BACC}$ for the functions $f(x) = x\sin(x)$ and the LeNet5 network, with $p = 0.05$ and $p = 0.1$ (top). The plot also includes a comparison between $p = 0.1$ and the non-probabilistic configuration with $S = 2$ (bottom).}
     \label{fig:rate_all}
\end{figure*}
\section{Proof Sketch}\label{sec:proof}
In this section, we provide a proof sketch for the main theorems, with the formal proofs available in Appendix~\ref{app:proofs}. We first bound the average approximation error for each of  $\nprcc$ and $\texttt{BACC}$ schemes. 

{\bf (I) Bounding the average approximation error of  $\nprcc$:} Using the decomposition technique introduced in \cite{moradicoded}, we add and subtract $f(\encs(\alpha_k))$ and apply the triangle inequality to bound the approximation error: 
\begin{align}
    \label{eq:decompose}
   \mathcal{L}_\stset(\hat{f}) 
    \leqslant \underbrace{\frac{2}{K} \sum^K_{k=1} \left(\decs(\alpha_k) - f(\encs(\alpha_k))\right)^2}_{\ldec(\hat{f})} +  \underbrace{\frac{2}{K} \sum^K_{k=1} \left(f(\encs(\alpha_k)) - f(\mathbf{x}_k)\right)^2}_{\lenc(\hat{f})}.
\end{align}
Since $f(\cdot)$ is $\nu$-Lipschitz, we can derive an upper bound for $\lenc(\hat{f})$:
\begin{align}\label{eq:enc_bound}
    \lenc(\hat{f}) \leqslant \frac{2\nu^2}{K} \sum_{k=1}^K \left(f(\encs(\alpha_k)) - f(\mathbf{x}_k)\right)^2.
\end{align}
Next, we drive an upper bound for $\ldec(\hat{f})$. Defining $h(z) := \decs(z) - f(\encs(z))$, we obtain
$
    \ldec(\hat{f}) \leqslant 2\norm{h}^2_{\lp{\infty}}.
$
 Given that $f(\cdot)$ has bounded first and second derivatives, we can prove that $h \in \hil{2}$. Applying Sobolev interpolation inequalities \cite{lecun1998gradient}, we can derive an upper bound for $\norm{h}^2_{\lp{\infty}}$:
\begin{align*}
\norm{h}^2_{\lp{\infty}} 
  \leqslant 2 \left(\norm{h}^2_{\lp{2}} \cdot \norm{h'}^2_{\lp{2}}\right)^{\frac{1}{2}}.
\end{align*}
In the next step, we derive an upper bound for $\norm{h}^2_{\lp{2}}$ and $\norm{h'}^2_{\lp{2}}$.
Given the set of non-straggler servers $\stset$, let us define two variables representing the maximum and minimum distances between consecutive mapping points of the decoder function within the non-straggler set:
$
\Delta^\stset_\textrm{max}:=\underset{f \in \left\{0,\dots,|\stset|\right\}}{\max} \left\{\beta_{i_{f+1}}-\beta_{i_f}\right\}, \Delta^\stset_\textrm{min}:=\underset{f \in \left\{1,\dots,|\stset|-1\right\}}{\min} \left\{\beta_{i_{f+1}}-\beta_{i_f}\right\}.
$
The solution to \eqref{eq:letcc_dec} is a spline function, known as a \emph{second-order smoothing spline} \cite{wahba1975smoothing, wahba1990spline, wahba2006convergence}. The norm of the error and the norm of its first derivative are bounded as follows \cite{ragozin1983error}:
\begin{align}
    \norm{h}^2_{\lp{2}} {\leqslant}H_0\norm{(f\circ\encs)^{''}}^2_{\lp{2}}\cdot L,
\end{align}
and
\begin{align}
\norm{h'}^2_{\lp{2}} \leqslant H_1\norm{(f\circ\encs)^{''}}^2_{\lp{2}} \cdot L^{\frac{1}{2}}(1 + \frac{L}{16})^{\frac{1}{2}},
\end{align}
where $H_0$ and $H_1$ are constants. The term $L$ is given by $L = p_2\left(\frac{\Delta^\stset_\textrm{max}}{\Delta^\stset_\textrm{min}}\right) \cdot \frac{(N - |\stset|) \Delta^\stset_\textrm{max}}{4} \declamb + D \cdot \left(\Delta^\stset_\textrm{max}\right)^4$, where $D$ is a constant and $p_2(\cdot)$ is a second-order polynomial with constant and positive coefficients. 
If $\declamb \leqslant N^{-4}$, we can show that $L \leqslant \mathcal{O}((\Delta^\stset_\textrm{max})^4)$. Therefore, there exists a constant $C > 0$ such that:
\begin{align}\label{eq:bound_letcc}
\norm{h}^2_{\lp{\infty}} \leqslant C\cdot D \mathrm{E}_{\stset}\left[(\Delta^\stset_\textrm{max})^4\right]^\frac{1}{2} \cdot \sqrt{D}\mathrm{E}_{\stset}\left[(\Delta^\stset_\textrm{max})^2\right]^\frac{1}{2} \cdot \sqrt{H_0H_1} \norm{(f\circ\encs)^{''}}^2_{\lp{2}}.
\end{align}

{\bf (II) Bounding the average approximation error of  $\texttt{BACC}$:} 

In \cite{floater2007barycentric} it is shown that
$$
    \norm{h}_{\lp{\infty}} \leqslant \Delta^\stset_\textrm{max}(1+\mu) \cdot \sum^2_{i=1} \norm{(f \circ \encs)^{(i)}}_{\lp{\infty}},
$$
where $\mu := \max_{1 \leq i \leq |\stset|-2} \min \left\{\frac{\beta_{j_{i+1}}-\beta_{j_i}}{\beta_{j_i}-\beta_{j_{i-1}}}, \frac{\beta_{j_{i+1}}-\beta_{j_i}}{\beta_{j_{i+2}}-\beta_{j_{i+1}}}\right\}$ and $\stset := \{\beta_{j_1}, \dots, \beta_{j_{|\stset|}}\}$. Applying Max-Min inequality, we have:
$
    \mu \leqslant \min \left\{ \max_{1 \leq i \leq |\stset|-2} \frac{\beta_{j_{i+1}}-\beta_{j_i}}{\beta_{j_i}-\beta_{j_{i-1}}}, \max_{1 \leq i \leq |\stset|-2} \frac{\beta_{j_{i+1}}-\beta_{j_i}}{\beta_{j_{i+2}}-\beta_{j_{i+1}}}\right\} 
    \leqslant \frac{\Delta^\stset_\textrm{max}}{\Delta^\stset_\textrm{min}}.
$
Thus:
\begin{align}\label{eq:berrut_bound}
\mathrm{E}_{\stset}\left[\norm{h}^2_{\lp{\infty}}\right] \leqslant \mathrm{E}_{\stset}\left[\left(\Delta^\stset_\textrm{max}(1+\frac{\Delta^\stset_\textrm{max}}{\Delta^\stset_\textrm{min}})\right)^2\right]  \cdot\left(\sum^2_{i=1} \norm{(f \circ \encs)^{(i)}}_{\lp{\infty}}\right)^2.  
\end{align}

As a result, \eqref{eq:bound_letcc} and \eqref{eq:berrut_bound} show that the random variables $\Delta^\stset_\textrm{max}$ and $\Delta^\stset_\textrm{min}$ play a key role in determining the approximation error for both coded computing schemes. Let us define random variable $R_{\stset, N}$ as the maximum number of consecutive stragglers among the servers. Thus, we have:
\begin{align}\label{eq:longest_straggler_ineq}
     \Delta^\stset_\textrm{max} \leqslant (R_{\stset, N}+1) \cdot \Delta_\textrm{max}, \quad \Delta^\stset_\textrm{min} \geqslant \Delta_\textrm{min}.
\end{align}
Therefore, establishing an upper bound for $\Delta^\stset_\textrm{max}$ and $\frac{\Delta^\stset_\textrm{max}}{\Delta^\stset_\textrm{min}}$ is directly related to deriving an upper bound for $R_{\stset, N}$. Let us define i.i.d. Bernoulli random variables ${B_1, \dots, B_N}$, where $B_i=1$ indicates that the server $i$ is straggler. Thus, $\mathrm{E}[B_i] = p$ for $i \in [N]$. Consequently, $R_{\stset, N}$ represents the maximum length of consecutive ones in the sequence ${B_1, \dots, B_N}$. This random variable is commonly referred to as the \emph{longest head run} in the literature \cite{ni2019asymptotic,sinha2009distribution,bateman1948power,philippou1986successes,gordon1986extreme}.
\begin{lemma}(\cite[Theorem 2]{gordon1986extreme}) \label{lem:longest_run}Let $R_{\stset, N}$ be the maximal sub-sequence length of consecutive ones or length of longest run in an i.i.d. Bernoulli sequence $B_1,\dots, B_N$ with $\mathrm{E}[B_i] = p = 1 - q$ for $i \in [N]$. Let $\theta=\frac{\pi^2}{\ln \left(\frac{1}{p}\right)}$. Then:
$$
\left|\mathrm{E}\left[R_{\stset, N}\right]-\left(\log _{\frac{1}{p}}(qN)+\frac{\gamma}{\ln \left(\frac{1}{p}\right)}-\frac{1}{2}\right)\right|< g_1(\theta)+o(1),
$$
and
$$
\left|\operatorname{Var}\left[R_{\stset, N}\right]-\left(\frac{\pi^2}{6\left(\ln \left(\frac{1}{p}\right)\right)^2}+\frac{1}{12}\right)\right|< g_2(\theta)+o(1),
$$
where $\gamma$ is the Euler's constant, $g_1(\theta):=\frac{\theta^{\frac{1}{2}}}{2 \pi e^\theta(1-e^{-\theta})^2}$, and $g_2(\theta):=\frac{2(1.1+0.7 \theta) \theta^{\frac{1}{2}}}{2 \pi e^\theta(1-e^{-\theta})^3}$.
\end{lemma}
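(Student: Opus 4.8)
The plan is to recognize $R_{\stset,N}$ as the classical \emph{longest head run} and to exploit the fact that its centered distribution is asymptotically an extreme-value (Gumbel) law, whose discretization produces exactly the constants appearing in the statement. The first step is to obtain a sharp expression for the lower tail $\Pr[R_{\stset,N} < m]$. To this end I would count the number $Z_m$ of maximal runs of ones of length at least $m$, writing $Z_m = \sum_i \mathbf{1}\{B_i = 0,\, B_{i+1} = \dots = B_{i+m} = 1\}$ with a suitable boundary convention, so that $\{R_{\stset,N} < m\}$ is the event $\{Z_m = 0\}$. Each indicator has mean $\approx q p^m$, whence $\mathrm{E}[Z_m] \approx N q p^m$; since distinct runs cannot overlap and are forced apart by a guaranteed $0$ symbol, the summands are only locally dependent and a Poisson approximation should apply.

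The second step is to make this approximation quantitative. I would invoke the Chen--Stein method: the total-variation distance between the law of $Z_m$ and a Poisson variable of the same mean is controlled by the geometrically decaying local dependence of the indicator family. This yields
\begin{align*}
\Pr[R_{\stset,N} < m] = \exp\!\left(-N q p^m\right)\bigl(1 + o(1)\bigr),
\end{align*}
uniformly over the relevant window of $m$. Setting $\kappa := \ln(1/p)$ and centering at $\mu := \log_{1/p}(qN)$, the substitution $m = \mu + t$ gives $N q p^m = e^{-\kappa t}$, so $R_{\stset,N} - \mu$ converges to a Gumbel law $\exp(-e^{-\kappa x})$ of scale $1/\kappa$.

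The third and most delicate step is to extract the mean and variance of the \emph{integer-valued} variable from this continuous limit. A continuous Gumbel of scale $1/\kappa$ has mean $\gamma/\kappa$ and variance $\pi^2/(6\kappa^2)$; rounding down to integers contributes the familiar $-\tfrac12$ shift in the mean and $+\tfrac1{12}$ in the variance, these being the mean and variance of the (approximately uniform) fractional part. This recovers the advertised leading terms $\log_{1/p}(qN) + \gamma/\ln(1/p) - \tfrac12$ and $\pi^2/(6\ln^2(1/p)) + \tfrac1{12}$. The residual discrepancy is periodic in $\mu$, so I would handle it by Poisson summation: expanding $\mathrm{E}[R_{\stset,N}]$ and $\mathrm{Var}[R_{\stset,N}]$ as Fourier series in the fractional part of $\mu$, the zeroth coefficients give the constants above, while the oscillatory coefficients are theta-type sums whose amplitudes decay like $e^{-\theta}$ with $\theta = \pi^2/\kappa$; bounding these sums should reproduce the explicit error functions $g_1(\theta)$ and $g_2(\theta)$.

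The hard part will not be the heuristic but its rigorous quantification on two fronts simultaneously. First, the Poisson approximation error must be shown to feed only into the $o(1)$ terms of the mean and variance rather than contaminating the explicit constants; this requires uniform control of $\Pr[R_{\stset,N} = m]$ across the whole range of $m$, including the light tails where the Gumbel approximation is poorest yet the integer weights are small. Second, the Fourier/theta-function analysis must be carried out sharply enough to isolate the constant Fourier term and bound the remaining oscillation by the precise closed forms $g_1(\theta), g_2(\theta)$ rather than a generic $\mathcal{O}(e^{-\theta})$ — this is exactly where the specific structure of the cited argument, and the appearance of $\theta = \pi^2/\ln(1/p)$, becomes essential.
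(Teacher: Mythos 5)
The first thing to note is that the paper contains no proof of Lemma~\ref{lem:longest_run} at all: the lemma is imported verbatim (with notation adapted) from Theorem~2 of Gordon, Schilling and Waterman \cite{gordon1986extreme}, and the paper discharges the proof burden purely by citation. So the only meaningful comparison is between your sketch and the argument in that reference, not anything internal to this paper.

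Measured against the reference, your outline follows essentially the same route: a declumping count $Z_m$ of runs of length at least $m$ with $\mathrm{E}[Z_m]\approx Nqp^m$, the tail estimate $\Pr[R_{\stset,N}<m]\approx\exp(-Nqp^m)$, the resulting discretized Gumbel law of scale $1/\ln(1/p)$, and a Fourier/Poisson-summation analysis in which the zeroth coefficient yields the constants $\gamma/\ln(1/p)-\tfrac12$ and $\pi^2/(6\ln^2(1/p))+\tfrac1{12}$ while the oscillatory coefficients are summed to give $g_1(\theta)$ and $g_2(\theta)$. Two caveats. First, Gordon et al.\ obtain the tail estimate by comparing $R_{\stset,N}$ with the maximum of roughly $Nq$ i.i.d.\ geometric run lengths rather than by Chen--Stein on clump counts (the Chen--Stein route belongs to the later Arratia--Goldstein--Gordon line of work); both are standard and deliver the same estimate, so this is a benign substitution. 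Second, and more substantively, the two steps you defer are the entire technical content of the theorem: (i) showing the approximation error is uniform enough in $m$ that it perturbs only the $o(1)$ terms of the first two moments, and (ii) identifying the oscillatory Fourier coefficients with Gamma-function values $\Gamma(1+2\pi ik/\ln(1/p))$, whose moduli decay like $e^{-k\theta}$ with $\theta=\pi^2/\ln(1/p)$, and summing the resulting series to obtain the closed forms $g_1,g_2$. Relatedly, your heuristic that the variance simply gains the $+\tfrac1{12}$ of a uniform fractional part is not rigorous as stated --- $\operatorname{Var}[\lfloor Y\rfloor]=\operatorname{Var}[Y]+\operatorname{Var}[\{Y\}]-2\operatorname{Cov}(Y,\{Y\})$, and the covariance term does not vanish for free; it is precisely the zeroth-coefficient computation that shows it contributes only to the oscillatory remainder. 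In short: your proposal is a faithful roadmap of the cited proof, but it stops exactly where that proof's work begins; for the purposes of this paper, nothing beyond the citation was required.
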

Lemma~\ref{lem:longest_run} states that although the $\mathrm{E}[R_{\stset, N}]$ grows with the $\log_{\frac{1}{p}}(N)$, its variance is nearly constant, i.e. $ g_2(\theta) - (\frac{\pi^2}{6(\ln (\frac{1}{p}))^2}+\frac{1}{12}) \leqslant \mathrm{Var[R_{\stset, N}]} \leqslant g_2(\theta) + (\frac{\pi^2}{6(\ln (\frac{1}{p}))^2}+\frac{1}{12})$.
This suggests that $R_{\stset, N}$ is highly predictable, with its value approximately equal to $\log_{\frac{1}{p}}(qN)$. Leveraging this observation, it can be shown that there exist constants $n_0$ and $C$ such that for $N > n_0$, the following inequality holds: 
\begin{align}\label{eq:run_prob} \Pr\left(R_{\stset, N} > C\left(\log_{\frac{1}{p}}(qN) + \sqrt{\frac{1}{\delta}}\right)\right) < \delta. 
\end{align} 
By combining \eqref{eq:run_prob} with the results in \eqref{eq:bound_letcc} and \eqref{eq:berrut_bound}, we can complete the proof.

\section{Experimental Results}\label{sec:exp_res}
In this section, we evaluate the performance of the $\nprcc$ and $\texttt{BACC}$ schemes under the probabilistic straggler configuration. We consider two classes of computing functions for our experiments. The first is a one-dimensional function, $f_1(x) = x\sin(x)$. The second is a deep neural network as a high dimensional function. Specifically, we use the LeNet5 architecture \cite{lecun1998gradient} trained on a handwritten image classification task. The network takes a $32 \times 32$ input image (flattened into a 1024-dimensional vector) and outputs a 10-dimensional vector representing the probability distribution over the classes. Thus, in this case, we are working with $\mathbf{f}_2:\mathbb{R}^{1024} \to \mathbb{R}^{10}$.

For the encoder and decoder mapping points, we use the first and second Chebyshev points, respectively as suggested by \cite{jahani2022berrut}. As an evaluation metric, we compute the empirical expectation of the recovery loss, \( \mathrm{E}_{\mathbf{x} \sim \mathcal{X}, \stset \sim F_{N}} \left[\frac{1}{K} \sum^K_{k=1} \norm{\dec(\alpha_k)- \func(\mathbf{x}_k)}^2_2\right] \). Specifically, for each value of $N$, we randomly sample input data points from the distribution and select stragglers with probability $p$. Each experiment is repeated $100$ times, and the average result is reported.

It is important to note that the smoothing parameters $\declamb$ and $\enclamb$ do not impact the convergence rate in noiseless computations, as long as $\declamb \leqslant N^{-4}$ (see \cite{moradicoded}). Therefore, in our experiments, we use the non-smooth version of $\nprcc$, setting $\declamb = \enclamb = 0$.

Our experimental results are consistent with the theoretical findings presented in Section~\ref{sec:main_res} and reveal the following key observations: for both class of computing functions (1) the $\nprcc$ scheme achieves a faster convergence rate compared to $\texttt{BACC}$, and (2) as $N$ increases, the convergence rate of average approximation error for the probabilistic configuration has a smaller exponent compared to the configuration with a maximum of $S$ stragglers (see Figure~\ref{fig:rate_all}).

\section*{Acknowledgment}
This material is based upon work supported by the National
Science Foundation under Grant CIF-2348638.



\onecolumn
\appendices
\section{Formal Proofs}\label{app:proofs}

\subsection{Proof of Theorem~\ref{th:letcc}}\label{sec:app_proof_th_letcc}
Using the decomposition technique introduced in \cite{moradicoded}, one can add and subtract $f(\encs(\alpha_k))$ to the objective \eqref{eq:perf_metric} and apply the triangle inequality to bound the average approximation error as follows:
\begin{align}
   \mathcal{L}_\stset(\hat{f}) 
    \leqslant &\underbrace{\frac{2}{K} \sum^K_{k=1} \left(\decs(\alpha_k) - f(\encs(\alpha_k))\right)^2}_{\ldec(\hat{f})} \nonumber \\ &+  \underbrace{\frac{2}{K} \sum^K_{k=1} \left(f(\encs(\alpha_k)) - f(\mathbf{x}_k)\right)^2}_{\lenc(\hat{f})}.
\end{align}
 The first term represents a proxy for the training error of the encoder regression function. The second term evaluates the decoder regression function at points different from its training data (i.e., $\{\beta_i, f\circ\encs(\beta_i)\}_{i \in \stset}$), thereby measuring the generalization error of the decoder regression function.

Let us start with bounding $\lenc$. Using Lipschitz continuity of  $f(\cdot)$, we obtain:
\begin{align}\label{eq:lenc_final}
     \lenc(\hat{f}) &= \frac{2}{K} \sum^K_{k=1} \left(f(\encs(\alpha_k))- f(x_k)\right)^2 \nonumber \\ 
     &\leqslant \frac{2}{K} \sum^K_{k=1} \left(\nu\cdot|\encs(\alpha_k)- x_k|\right)^2 \nonumber \\
     &= \frac{2\nu^2}{K} \sum^K_{k=1}  (\encs(\alpha_k)- x_k)^2.
\end{align}
Defining $h(z) := \decs(z) - f(\encs(z))$, we obtain:
\begin{align}
    \ldec(\hat{f}) &= \frac{2}{K} \sum^K_{k=1} \left(\decs(\alpha_k) -f(\encs(\alpha_k))\right)^2_2
    \nonumber \\ 
    &\lec{}{=} \frac{2}{K} \sum^K_{k=1} h(\alpha_k)^2
    \nonumber \\
    &\lec{}{\leqslant} \frac{2}{K} \sum^K_{k=1} \norm{h}^2_{\lp{\infty}}
    .
\end{align}
It can be demonstrated that $h \in \hil{2}$ \cite[Lemma~3]{moradicoded}. By applying Sobolev interpolation inequalities \cite{lecun1998gradient}, we can derive the following:
\begin{align}\label{eq:dec_err_bound}
\norm{h}^2_{\lp{\infty}} 
  \leqslant 2\left(\norm{h}^2_{\lp{2}} \cdot \norm{h'}^2_{\lp{2}}\right)^{\frac{1}{2}}.
\end{align}

The solution to \eqref{eq:letcc_dec} is a spline function, known as a \emph{second-order smoothing spline} \cite{wahba1975smoothing, wahba1990spline, wahba2006convergence}. The norm of the error and the norm of its first derivative for the second-order smoothing spline are bounded as follows \cite{ragozin1983error}:
\begin{align}\label{eq:h_1}
    \norm{h}^2_{\lp{2}} {\leqslant}H_0\norm{(f\circ\encs)^{''}}^2_{\lp{2}}\cdot L,
\end{align}
and
\begin{align}\label{eq:hp_1}
\norm{h'}^2_{\lp{2}} \leqslant H_1\norm{(f\circ\encs)^{''}}^2_{\lp{2}} \cdot L^{\frac{1}{2}}(1 + \frac{L}{16})^{\frac{1}{2}},
\end{align}
where $H_0$ and $H_1$ are constants. The term $L$ is given by \begin{align}\label{eq:L_formula}
    L = p_2\left(\frac{\Delta^\stset_\textrm{max}}{\Delta^\stset_\textrm{min}}\right) \cdot \frac{(N - |\stset|) \Delta^\stset_\textrm{max}}{4} \declamb + D \cdot \left(\Delta^\stset_\textrm{max}\right)^4,
    \end{align}
    where $D$ is a constant, $p_2(\cdot)$ is a second-order polynomial with constant and positive coefficients, and $\Delta^\stset_\textrm{max}, \Delta^\stset_\textrm{min}$ are consecutive distance between the decoder mapping points in the non-straggler set $\stset$ as follows:
\begin{gather}
\Delta^\stset_\textrm{max}:=\underset{f \in \left\{0,\dots,|\stset|\right\}}{\max} \left\{\beta_{i_{f+1}}-\beta_{i_f}\right\}, \quad \Delta^\stset_\textrm{min}:=\underset{f \in \left\{1,\dots,|\stset|-1\right\}}{\min} \left\{\beta_{i_{f+1}}-\beta_{i_f}\right\}.
\end{gather}
Combining results in \eqref{eq:h_1} and \eqref{eq:hp_1} with \eqref{eq:dec_err_bound}, we can conclude:
\begin{align}\label{eq:h_norm_infty_bound_2}
    \norm{h}^2_{\lp{\infty}} 
    &\leqslant 2\sqrt{H_0H_1}\norm{(f\circ\encs)^{''}}^2_{\lp{2}}\cdot
    L^{\frac{1}{2}} \left[L^\frac{1}{2}(1 + \frac{L}{16})^\frac{1}{2}\right]^{\frac{1}{2}}.
\end{align}

We model the occurrence of a straggler as a Bernoulli random variable with an expected value of $p$. Consequently, we consider a set of i.i.d. Bernoulli random variables $\{B_1, \dots, B_N\}$, where each variable represents the state of a server. Specifically, $B_i = 1$ indicates that server $i$ is a straggler, while $B_i = 0$ means that the master node has received the results from server $i$. 

Next, we define the random variable $R_{\stset, N}$ as the length of the longest subsequence of ones in the sequence $\{B_1, \dots, B_N\}$. This random variable is commonly referred to in the literature as the \emph{longest head run} \cite{ni2019asymptotic,sinha2009distribution,bateman1948power,philippou1986successes,gordon1986extreme}.
 Therefore, we have:
\begin{align}\label{eq:longest_straggler_ineq_2}
     \Delta^\stset_\textrm{max} \leqslant (R_{\stset, N}+1) \cdot \Delta_\textrm{max},\quad \frac{\Delta^\stset_\textrm{max}}{\Delta^\stset_\textrm{min}} \lec{(a)}{\leqslant}(R_{\stset, N}+1) \cdot \frac{\Delta_\textrm{max}}{\Delta_\textrm{min}}\lec{}{\leqslant}(R_{\stset, N}+1)B,
\end{align}
where (a) is due to $\Delta^\stset_\textrm{max} \geqslant \Delta_\textrm{min}$. Note that since $\Delta_\textrm{min} \leqslant \frac{2}{N}$ and $\frac{\Delta^\textrm{max}}{\Delta^\textrm{min}} \leqslant B$, there exists a constant $J$ such that $\Delta^\textrm{max} \leqslant \frac{J}{N}$.

Using \eqref{eq:longest_straggler_ineq_2} and definition of $L$, one can conclude:
\begin{align}\label{eq:L_ineq}
    L &\lec{}{\leqslant} p_3\left(R_{\stset, N} + 1\right) \cdot \left(N - |\stset|\right)\Delta_\textrm{max}\declamb + D\left(R_{\stset, N}+1\right)^4 \left(\Delta_\textrm{max}\right)^4\nonumber \\
    &\lec{(a)}{\leqslant} p_3\left(R_{\stset, N} + 1\right) \cdot J\frac{\left(N - |\stset|\right)}{N}\declamb + D\left(R_{\stset, N}+1\right)^4 N^{-4}
    \nonumber \\
    &\lec{(b)}{\leqslant} J\frac{p_3\left(R_{\stset, N} + 1\right)}{N^{4}}  + D\frac{\left(R_{\stset, N}+1\right)^4}{N^{4}}
    \nonumber \\
    &\lec{}{\leqslant} \tilde{J}\frac{\left(R_{\stset, N}+1\right)^4}{N^{4}},
\end{align}
where $p_3(R_{\stset, N} + 1):=p_2\left(B(R_{\stset, N}+1)\right) \cdot \frac{(R_{\stset, N}+1)}{4}$ is a degree-3 polynomial of $(R_{\stset, N}+1)$ with constant positive coefficients. Step (a) follows from $\Delta_\textrm{max} \leqslant \frac{J}{N}$, and (b) is due to $\declamb \leqslant N^{-4}$ and $N - |\stset| \leqslant N$. Here, $\tilde{J}$ is defined as $\tilde{J}:=J \cdot p_3(1) + D$.

\begin{lemma}\label{lem:longest_run_2}
    With the same assumption of Lemma~\ref{lem:longest_run}, for every $\delta \in (0, 1)$, there exist constants $C, n_0>0$ such that for every $N>n_0$:
    \begin{align}
        \Pr\left(R_{\stset, N} > C\left(\log_{\frac{1}{p}}(qN) + \sqrt{\frac{1}{\delta}}\right)\right) < \delta
    \end{align}
\end{lemma}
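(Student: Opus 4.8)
The plan is to exploit the sharp concentration implied by Lemma~\ref{lem:longest_run}: the mean of $R_{\stset, N}$ grows only like $\log_{\frac{1}{p}}(qN)$, while its variance stays bounded by a constant that depends on $p$ but \emph{not} on $N$. This bounded variance is precisely what makes a second-moment (Chebyshev) argument produce the claimed $\sqrt{1/\delta}$ tail, as opposed to a sharper exponential concentration that would instead yield a $\log(1/\delta)$ dependence.

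First I would absorb all the $p$-dependent but $N$-independent quantities from Lemma~\ref{lem:longest_run} into two constants. Since each of the two $o(1)$ terms converges to $0$ as $N \to \infty$, each is bounded uniformly over the range $N \geq \lceil \frac{1}{q}\rceil$; combining these bounds with $g_1(\theta)$, $g_2(\theta)$ and the explicit additive constants yields constants $M_1 = M_1(p)$ and $V = V(p)$ such that
$$\mathrm{E}\left[R_{\stset, N}\right] \leqslant \log_{\frac{1}{p}}(qN) + M_1, \qquad \mathrm{Var}\left[R_{\stset, N}\right] \leqslant V,$$
for every $N \geqslant \lceil \frac{1}{q}\rceil$.

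Next I would apply Chebyshev's inequality in its one-sided form, $\Pr\big(R_{\stset, N} \geqslant \mathrm{E}[R_{\stset, N}] + t\big) \leqslant V/t^2$, and set $t = \sqrt{V/\delta}$ so that the right-hand side equals $\delta$. Combining with the mean bound gives
$$\Pr\Big(R_{\stset, N} \geqslant \log_{\frac{1}{p}}(qN) + M_1 + \sqrt{V}\,\sqrt{\tfrac{1}{\delta}}\Big) \leqslant \delta.$$
It then remains to dominate the threshold by $C\big(\log_{\frac{1}{p}}(qN) + \sqrt{1/\delta}\big)$. Here I would use two elementary facts: (i) the assumption $N \geqslant \lceil \frac{1}{q}\rceil$ forces $qN \geqslant 1$, hence $\log_{\frac{1}{p}}(qN) \geqslant 0$; and (ii) $\delta < 1$ forces $\sqrt{1/\delta} > 1$. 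Fact (ii) lets me fold the additive constant into the $\sqrt{1/\delta}$ term via $M_1 \leqslant M_1 \sqrt{1/\delta}$, so the threshold is at most $\log_{\frac{1}{p}}(qN) + (M_1 + \sqrt{V})\sqrt{1/\delta} \leqslant C\big(\log_{\frac{1}{p}}(qN) + \sqrt{1/\delta}\big)$ with $C = \max\{1,\, M_1 + \sqrt{V}\}$. Since $\{R_{\stset, N} > \text{threshold}\}$ has probability no larger than the event above, the bound follows.

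The argument is essentially routine once Lemma~\ref{lem:longest_run} is available; the only genuine content is recognizing that the \emph{constant} variance is what produces the polynomial-in-$1/\sqrt{\delta}$ form. The main (minor) obstacle is the bookkeeping: guaranteeing that the $o(1)$ terms are uniformly bounded over the admissible range of $N$, and that the additive constants can be absorbed into $C$ using $\log_{\frac{1}{p}}(qN) \geqslant 0$ and $\sqrt{1/\delta} > 1$. Finally, to upgrade the $\leqslant \delta$ delivered by Chebyshev to the strict $< \delta$ of the statement, I would enlarge $C$ by an arbitrarily small margin (equivalently, take $t$ marginally larger than $\sqrt{V/\delta}$), which makes the Chebyshev bound strictly smaller than $\delta$.
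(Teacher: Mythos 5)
Your proposal is correct and follows essentially the same route as the paper's proof: bound $\mathrm{E}[R_{\stset,N}]$ by $\log_{\frac{1}{p}}(qN)$ plus a constant and $\mathrm{Var}[R_{\stset,N}]$ by a constant via Lemma~\ref{lem:longest_run}, then apply a one-sided Chebyshev argument and absorb the constants into $C$ using $\log_{\frac{1}{p}}(qN)\geqslant 0$ and $\sqrt{1/\delta}>1$. The only (immaterial) differences are that the paper uses Cantelli's form $\frac{1}{1+c^2}$ with $c=\sqrt{\frac{1}{\delta}-1}$ while you use the plain $V/t^2$ bound, and that you are more explicit than the paper about uniformly bounding the $o(1)$ terms and about the absorption step.
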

\begin{proof}
    Let us define $\mu_1 := \frac{\theta^{\frac{1}{2}}}{2 \pi e^\theta\left(1-e^{-\theta}\right)^2} + \frac{\gamma}{\ln \left(\frac{1}{p}\right)}-\frac{1}{2} + 1$ and $\mu_2 := \frac{2(1.1+0.7 \theta) \theta^{\frac{1}{2}}}{2 \pi e^\theta\left(1-e^{-\theta}\right)^3} + \frac{\pi^2}{6\left(\ln \left(\frac{1}{p}\right)\right)^2}+\frac{1}{12} + 1$ for the sake of simplicity. Therefore, there exists an integer $n_0 > 0$ such that for all $N > n_0$, the $o(1)$ asymptotic error terms are strictly bounded by $1$. This gives us:
\begin{align}
    \frac{\gamma}{\ln (1/p)} - \frac{1}{2} + g_1(\theta) + o(1) &\leqslant \mu_1, \\
    \frac{\pi^2}{6(\ln (1/p))^2} + \frac{1}{12} + g_2(\theta) + o(1) &\leqslant \mu_2.
\end{align}

    Thus, we have:
    \begin{align}
        \Pr\left(R_{\stset, N} \geqslant \log_{\frac{1}{p}}(qN) + \mu_1 + c\sqrt{\mu_2} \right) &\lec{(a)}{\leqslant}  \Pr\left(R_{\stset, N} \geqslant  \mathrm{E}[R_{\stset, N}] + c\sqrt{\mu_2} \right) \nonumber \\
        &\lec{(b)}{\leqslant}  \Pr\left(R_{\stset, N} \geqslant  \mathrm{E}[R_{\stset, N}] + c\cdot \mathrm{Var}(R_{\stset, N})\right) \nonumber \\
        &\lec{(c)}{\leqslant}  \frac{1}{1 + c^2},
    \end{align}
    where (a) follows from $\mathrm{E}[R_{\stset, N}] \leqslant \log_{\frac{1}{p}}(qN) + \mu_1$, (b) is due to $\mathrm{Var}[R_{\stset, N}] \leqslant \mu_2$, and (c) follows from the one-sided Chebyshev inequality. Thus, selecting $c = \sqrt{\frac{1}{\delta}-1}$ completes the proof.
\end{proof}
Using Lemma~\ref{lem:longest_run_2} and \eqref{eq:L_ineq}, one can show that:
\begin{align}\label{eq:L_ineq_prob}
    \Pr\left(L \leqslant \tilde{J}\frac{\left(
    \log_{\frac{1}{p}}(qN) + \sqrt{\frac{1}{\delta}}
    \right)^4}{N^{4}}\right) \geqslant \Pr\left(L \leqslant \tilde{J}\frac{\left(
    \log_{\frac{1}{p}}(qN) + \sqrt{\frac{1}{\delta}-1}
    \right)^4}{N^{4}}\right) \geqslant 1- \delta.
\end{align}
Therefore, with the probability of at least $1- \delta$, we have:
\begin{align}\label{eq:ldec_proof_final}
    \ldec(\hat{f}) &\lec{}{\leqslant} \norm{(f\circ\encs)^{(2)}}^2_{\lp{2}} \left[2H_0^\frac{1}{2}H_1^\frac{1}{2}\Tilde{J}^{\frac{3}{4}}\cdot \left(\frac{
    \log_{\frac{1}{p}}(qN) + \sqrt{\frac{1}{\delta}}
    }{N}\right)^3\cdot \left(1 + \frac{\Tilde{J}}{16}\left(\frac{
    \log_{\frac{1}{p}}(qN) + \sqrt{\frac{1}{\delta}}
    }{N}\right)^4\right)^\frac{1}{4}
    \right] \nonumber \\
    &\lec{(a)}{\leqslant} 2H_0^\frac{1}{2}H_1^\frac{1}{2}\Tilde{J}^{\frac{3}{4}}\left(1+\frac{\Tilde{J}}{16}\right)^\frac{1}{4}\cdot \norm{(f\circ\encs)^{(2)}}^2_{\lp{2}} \left(\frac{
    \log_{\frac{1}{p}}(qN) + \sqrt{\frac{1}{\delta}}
    }{N}\right)^3,
\end{align}
where (a) follows from $\log_{\frac{1}{p}}(qN) + \sqrt{\frac{1}{\delta}} \leqslant N$. Combining \eqref{eq:ldec_proof_final} and \eqref{eq:lenc_final} completes the proof.

\subsection{Proof of Theorem~\ref{th:berrut}}\label{sec:app_proof_th_berrut}

For the $\texttt{BACC}$ scheme, we define $h(z) := \decs(z) - f(\encs(z))$. The following upper bound for $\norm{h}_{\lp{\infty}}$ is given in \cite{floater2007barycentric}:
\begin{align}
    \norm{h}_{\lp{\infty}} \leqslant \Delta^\stset_\textrm{max}(1+\mu) \cdot \sum^2_{i=1} \norm{(f \circ \encs)^{(i)}}_{\lp{\infty}},
\end{align}
where $\mu := \max_{1 \leq i \leq |\stset|-2} \min \left\{\frac{\beta_{j_{i+1}}-\beta_{j_i}}{\beta_{j_i}-\beta_{j_{i-1}}}, \frac{\beta_{j_{i+1}}-\beta_{j_i}}{\beta_{j_{i+2}}-\beta_{j_{i+1}}}\right\}$ and $\stset := \{\beta_{j_1}, \dots, \beta_{j_{|\stset|}}\}$. Applying Max-min inequality, we have:
\begin{align}
    \mu &\leqslant \min \left\{ \max_{1 \leq i \leq |\stset|-2} \frac{\beta_{j_{i+1}}-\beta_{j_i}}{\beta_{j_i}-\beta_{j_{i-1}}}, \max_{1 \leq i \leq |\stset|-2} \frac{\beta_{j_{i+1}}-\beta_{j_i}}{\beta_{j_{i+2}}-\beta_{j_{i+1}}}\right\} \nonumber \\ 
    &\leqslant \frac{\Delta^\stset_\textrm{max}}{\Delta^\stset_\textrm{min}}.
\end{align}
Therefore, using the inequalities from \eqref{eq:longest_straggler_ineq_2}, we obtain:
\begin{align}\label{eq:berrut_bound_final}
\norm{h}^2_{\lp{\infty}}&\lec{}{\leqslant} \left(\Delta^\stset_\textrm{max}\left(1+\frac{\Delta^\stset_\textrm{max}}{\Delta^\stset_\textrm{min}}\right)\right)^2   \cdot\left(\sum^2_{i=1} \norm{(f \circ \encs)^{(i)}}_{\lp{\infty}}\right)^2 \nonumber \\
 &\lec{}{\leqslant} \left(\Delta_\textrm{max}\right)^2 \cdot \left(R_{\stset, N}+1\right)^2 \cdot \left(1+ B\left(R_{\stset, N}+1\right)\right)^2 \cdot\left(\sum^2_{i=1} \norm{(f \circ \encs)^{(i)}}_{\lp{\infty}}\right)^2
 \nonumber \\
 &\lec{(a)}{\leqslant}  \frac{J^2\left(R_{\stset, N}+1\right)^2}{N^2} \cdot \left(1+ B\left(R_{\stset, N}+1\right)\right)^2 \cdot\left(\sum^2_{i=1} \norm{(f \circ \encs)^{(i)}}_{\lp{\infty}}\right)^2
  \nonumber \\
 &\lec{}{\leqslant}  2J^2B\cdot \frac{\left(R_{\stset, N}+1\right)^4}{N^2}  \cdot\left(\sum^2_{i=1} \norm{(f \circ \encs)^{(i)}}_{\lp{\infty}}\right)^2,
\end{align}
where (a) is based on the inequality $\Delta_\textrm{max} \leqslant \frac{J}{N}$.
Finally, applying Lemma~\ref{lem:longest_run_2} to \eqref{eq:berrut_bound_final} completes the proof.

\subsection{Proof of Corollary~\ref{cor:rate}}\label{sec:app_proof_cor_rate}
Since $f\circ \encs(\cdot)$ does not depend on $N$, the convergence rate of the $\texttt{BACC}$'s average approximation error follows directly from Theorem~\ref{th:berrut}. For the $\nprcc$ scheme, consider a natural spline function fitted to the data $\{\alpha_k, x_k\}_{k=1}^K$. Additionally, let $\encs^*(\cdot)$ denote the minimizer of the upper bound in \eqref{eq:th_letcc}. Therefore, we have:
\begin{align}
    \mathcal{L}_\stset(\hat{f}) &\leqslant \frac{2\nu^2}{K} \sum_{k=1}^K (\encs^*(\alpha_k) - x_k)^2 + C_l\frac{\left(\log_{\frac{1}{p}}(qN) + \sqrt{\frac{1}{\delta}}\right)^3}{N^3} \cdot \norm{(f \circ \encs^*)''}^2_{\lp{2}} 
    \nonumber \\
    &\leqslant \frac{2\nu^2}{K} \sum_{k=1}^K (\widetilde{\encs}(\alpha_k) - x_k)^2 + C_l\frac{\left(\log_{\frac{1}{p}}(qN) + \sqrt{\frac{1}{\delta}}\right)^3}{N^3} \cdot \norm{(f \circ \widetilde{\encs})''}^2_{\lp{2}}  \nonumber \\
    &\lec{(a)}{\leqslant} C_l\frac{\left(\log_{\frac{1}{p}}(qN) + \sqrt{\frac{1}{\delta}}\right)^3}{N^3} \cdot \norm{(f \circ \widetilde{\encs})''}^2_{\lp{2}},
\end{align}
where (a) holds because the natural spline overfits the input data, i.e., $\widetilde{\encs}(\alpha_k) = x_k$ for all $k \in [K]$. Since $f \circ \widetilde{\encs}(\cdot)$ is independent of $N$, the proof is complete.

\subsection{Proof of Corollary~\ref{cor:cheb}}\label{sec:app_proof_cor_cheb}
Based on the definition of Chebyshev points, we have the following expressions for the minimum and maximum consecutive distance of these points:
\begin{align}
    \Delta_\textrm{max} = 
    \begin{cases}
        \cos\left(\frac{\pi\left(\lfloor\frac{N}{2}\rfloor - 1\right)}{N-1}\right) - \cos\left(\frac{\pi\left(\lceil\frac{N}{2}\rceil - 1\right)}{N-1}\right)  & \text{if $N$ is odd}, \\
          \cos\left(\frac{\pi\left(\frac{N}{2}-1\right)}{N-1}\right) - \cos\left(\frac{\pi\frac{N}{2}}{N-1}\right) & \text{if $N$ is even},
    \end{cases},\quad
    \Delta_\textrm{min} = 1 - \cos\left(\frac{\pi}{N-1}\right).
\end{align}
Thus, we have:
\begin{align}
    \Delta_\textrm{max} &= 
    \begin{cases}
    2 \sin\left(\frac{\pi(N-2)}{2(N-1)}\right) \sin\left(\frac{\pi}{2(N-1)}\right)  &\text{if $N$ is odd}, \\
    2 \sin\left(\frac{\pi}{2(N-1)}\right)  &\text{if $N$ is even}
    \end{cases}
     = \mathcal{O}\left(\frac{1}{N}\right), \\
    \Delta_\textrm{min} &= 2 \sin^2\left(\frac{\pi}{2(N-1)}\right) = \mathcal{O}\left(\frac{1}{N^2}\right).
\end{align}
As a result $\frac{\Delta_\textrm{max}}{\Delta_\textrm{min}} \leqslant \frac{1}{\sin\left(\frac{\pi}{2(N-1)}\right)} = \mathcal{O}\left(N\right)$. 

Let us start with $\nprcc$ scheme. Substituting in \eqref{eq:L_formula}, for sufficiently large $N$ and $\declamb \leqslant N^{-6}$, we have:
\begin{align}
    L &= p_2\left(\frac{\Delta^\stset_\textrm{max}}{\Delta^\stset_\textrm{min}}\right) \cdot \frac{(N - |\stset|) \Delta^\stset_\textrm{max}}{4} \declamb + D \cdot \left(\Delta^\stset_\textrm{max}\right)^4 \nonumber \\
    &\lec{(a)}{\leqslant} p_2\left(J_1N\cdot(R_{\stset, N} + 1)\right) \cdot \frac{J_2(R_{\stset, N} + 1)(N - |\stset|)}{4N} \declamb + D \left(\frac{J_2(R_{\stset, N} + 1)}{N}\right)^4
    \nonumber \\
    &\lec{}{\leqslant} \Tilde{C}_1(R_{\stset, N} + 1)^3N^2\declamb + \Tilde{C}_2 (R_{\stset, N} + 1)^4 N^{-4},
    \nonumber \\
    &\lec{(b)}{\leqslant} \Tilde{C}\frac{(R_{\stset, N} + 1)^4}{N^{-4}},
\end{align}
where (a) follows from $\frac{\Delta_\textrm{max}}{\Delta_\textrm{min}} \leqslant J_1N$ and $\Delta_\textrm{max} \leqslant \frac{J_2}{N}$, and (b) is due to $\declamb \leqslant N^{-6}$. Following similar steps as those used in the proof of Theorem~\ref{th:letcc} completes the proof.

For the $\texttt{BACC}$ scheme, the steps done in \eqref{eq:berrut_bound_final} will results in an upper bound that does not converge to zero as $N$ increases because:
$$
\left(\Delta^\stset_\textrm{max}\left(1+\frac{\Delta^\stset_\textrm{max}}{\Delta^\stset_\textrm{min}}\right)\right)^2 \leqslant \left(\frac{J_2}{N}(R_{\stset, N}+1)\left(1 + J_1(R_{\stset, N}+1)N\right)\right)^2 \leqslant \mathcal{O}\left(R_{\stset, N}^4\right) \approx \mathcal{O}\left(\left(\log_{\frac{1}{p}}(qN)\right)^4\right).
$$
However, \cite{jahani2022berrut} demonstrated that if the maximum number of consecutive stragglers is $s$, the following holds:
\begin{align}\label{eq:bound_mu}
    \mu &\leqslant \min \left\{ \max_{1 \leq i \leq |\stset|-2} \frac{\beta_{j_{i+1}}-\beta_{j_i}}{\beta_{j_i}-\beta_{j_{i-1}}}, \max_{1 \leq i \leq |\stset|-2} \frac{\beta_{j_{i+1}}-\beta_{j_i}}{\beta_{j_{i+2}}-\beta_{j_{i+1}}}\right\} \leqslant \frac{(s+1)(s+3)\pi^2}{4}.
\end{align}
Therefore, we have:
\begin{align}
    \norm{h}^2_{\lp{\infty}}  &\leqslant \left(\Delta^\stset_\textrm{max}\right)^2(1+\mu)^2 \cdot \left(\sum^2_{i=1} \norm{(f \circ \encs)^{(i)}}_{\lp{\infty}}\right)^2 \nonumber \\ &\lec{(a)}{\leqslant} \left(\frac{J_2(R_{\stset, N}+1)}{N}\right)^2(1+\frac{((R_{\stset, N}+1)+1)((R_{\stset, N}+1)+3)\pi^2}{4})^2 \cdot \left(\sum^2_{i=1} \norm{(f \circ \encs)^{(i)}}_{\lp{\infty}}\right)^2 \nonumber \\
    &\lec{}{\leqslant} \tilde{C}\frac{(R_{\stset, N}+1)^4}{N^2} \cdot \left(\sum^2_{i=1} \norm{(f \circ \encs)^{(i)}}_{\lp{\infty}}\right)^2,
\end{align}
where (a) is follows from \eqref{eq:bound_mu} and $\Delta^\stset_\textrm{max} \leqslant \frac{J_2}{N}(R_{\stset, N}+1)$. Using Lemma~\ref{lem:longest_run_2} completes the proof.
\end{document}